\documentclass[english,conference]{IEEEtran}
\usepackage[T1]{fontenc}
\usepackage[utf8]{inputenc}
\usepackage{amsthm}
\usepackage{amsmath}
\usepackage{amssymb}
\usepackage{graphicx}

\makeatletter
  \theoremstyle{definition}
  \newtheorem{defn}{\protect\definitionname}
  \theoremstyle{plain}
  \newtheorem{lem}{\protect\lemmaname}
  \theoremstyle{remark}
  \newtheorem{rem}{\protect\remarkname}
\theoremstyle{plain}
\newtheorem{thm}{\protect\theoremname}
  \theoremstyle{plain}
  \newtheorem{fact}{\protect\factname}
  \theoremstyle{plain}
  \newtheorem{cor}{\protect\corollaryname}

\IEEEoverridecommandlockouts
\usepackage{cite}

%
\ifCLASSINFOpdf
\else
\fi
\hyphenation{op-tical net-works semi-conduc-tor}
\usepackage{indentfirst}

\makeatother

\usepackage{babel}

\makeatother

\usepackage{babel}
  \providecommand{\definitionname}{Definition}
  \providecommand{\factname}{Fact}
  \providecommand{\lemmaname}{Lemma}
  \providecommand{\remarkname}{Remark}
\providecommand{\corollaryname}{Corollary}
\providecommand{\theoremname}{Theorem}

\begin{document}

\title{An Upper Bound on Broadcast Subspace Codes}

\author{\IEEEauthorblockN{Yimin Pang}\IEEEauthorblockA{Department of
Information Science\\
 and Electronic Engineering\\
 Zhejiang University\\
 Hangzhou, China, 310027\\
 Email: yimin.pang@zju.edu.cn } \and\IEEEauthorblockN{Thomas Honold}
\IEEEauthorblockA{Department of Information Science\\
 and Electronic Engineering\\
 Zhejiang University\\
 Hangzhou, China, 310027\\
 Email: honold@zju.edu.cn} %
\thanks{Supported by the National Natural Science Foundation of China under
Grant No. 60872063%
}}
\maketitle
\begin{abstract}
Linear operator broadcast channel (LOBC) models the scenario of multi-rate
packet broadcasting over a network, when random network coding is
applied. This paper presents the framework of algebraic coding for
LOBCs and provides a Hamming-like upper bound on (multishot) subspace
codes for LOBCs. 
\end{abstract}

\section{Introduction}

In an acyclic network where random linear network coding\cite{1705002}
is applied, packet transmission in a generation can be regarded as
conveying the subspace spanned by the input vectors \cite{4567581},
and the corresponding equivalent channel model is called \emph{linear
operator channel} (LOC). We will denote the set of all $i$-dimensional
subspaces of $\mathbb{F}_{q}^{m}$ by $\mathcal{P}(\mathbb{F}_{q}^{m},i)$,
which is known as a Grassmannian. The overall set of subspaces of
$\mathbb{F}_{q}^{m}$ is denoted by $\mathcal{P}(\mathbb{F}_{q}^{m})$.
In general, the input and output symbols of a LOC are taken from the
set of all subspaces $\mathcal{P}(\mathbb{F}_{q}^{m})$ of $\mathbb{F}_{q}^{m}$
(referred to as {}``ambient space''). If the input alphabet of a
LOC happens to be $\mathfrak{X}=\mathcal{P}(\mathbb{F}_{q}^{m},l)$
(i.e. all $l$ dimensional subspaces of $\mathbb{F}_{q}^{m}$) we
call it a \textit{constant dimension} LOC. If missing basis vectors
of a subspace constitutes the only possible channel interference,
i.e. no error vectors are inserted into the transmitted subspace,
we say the LOC is \textit{multiplicative}.

An LOC could only be viewed as either unicast channel or constant
rate broadcast channel, however the benefits of network coding are
mainly in packet multicast scenarios. In \cite{Pangb}, a modified
model called \textit{linear operator broadcast channel} (\textit{LOBC})
is proposed to formulate the problem of packet broadcasting over LOCs.
An LOBC is a broadcast channel with subspaces as input/outputs. All
receivers have their possibly different subchannel capacities and
they are able to collect message of their own interests at variable
rates. In \cite{Pangb} the capacity region of a special type of degraded
\emph{constant dimension multiplicative LOBCs (CMLOBCs)} is studied.
CMLOBCs are a generalization of broadcast erasure channels. Although
time sharing is sufficient to achieve the boundary of the capacity
region of degraded broadcast erasure channels, the same conclusion
does not hold for CMLOBCs. This motivates us to study algebraic coding
theory for broadcasting over LOCs. In this paper we set up the framework
of broadcast subspace codes (i.e. a class of multishot subspace codes
with unequal error protection) with separation vectors as performance
parameter and prove an upper bound on the code construction.

Now we briefly summarize previous work on LOC and some related work
about conventional unequal error protection codes. The concept of
LOC was first proposed in \cite{4567581} from a combinatorial viewpoint,
and further studied in \cite{4608992} and many others. Starting with
\cite{5429134}, research on linear operator channels went into the
realm of information theory. In \cite{5429134} Silva et al.\ investigated
the capacity of a random linear network coding channel with matrices
as input/output symbols. Later, by regarding a LOC as a particular
DMC, Uch{ô}a-Filho and N{ó}brega \cite{uchoa2010capacity} studied
the capacity of CMLOCs. Yang et al. \cite{yang2010optimality,yang2010linear}
computed general non-constant multiplicative LOC capacities. In \cite{jafari2009capacity}
the rate region of multiple source access LOCs was investigated. In
\cite{Pangb} packet broadcasting over LOC is presented, and some
initial results on the capacity region of CMLOBCs are proved, while
the capacity region of degraded CMLOBCs (and general LOBCs) remains
unknown.

Rather than using one-shot subspace codes as investigated in \cite{4567581,4608992,5429134},
for achieving the rate region developed in \cite{uchoa2010capacity,yang2010optimality,yang2010linear}
we need multiple uses of the same LOC, which makes it necessary to
construct multishot subspace codes. Some coding strategies can be
found in \cite{5205750,5507933,yang2010coding}. However, constructing
good multishot subspace codes still remains a challenging problem---mainly
due to the apparent lack of a nice group structure ({}``linearity'')
on projective geometries over finite fields, see \cite{etzion-vardy08}.

Unequal error protection (UEP) coding goes back to \cite{1054054},
where Masnick and Wolf suggested techniques to protect code bits in
different levels. Construction and bounds on linear unequal protection
codes can be found in \cite{1056327} and many others. Coding and
modulation issues on UEP are addressed in \cite{243441} and \cite{diggavisuccessive}
at almost the same time.

\section{Coding for Linear Operator Broadcast Channels (LOBCs)}

\subsection{Broadcast Subspace Codes}

Reference \cite{Pangb} considers the case of a multiple user LOC
where a sender communicates with $K$ receivers $u_{1}$, $u_{2}$,...,$u_{K}$
simultaneously. The subchannels from the sender to $u_{k}$, $k=1,2,...,K$,
are linear operator channels with input and output alphabets $\mathfrak{X},\mathfrak{Y}\subseteq\mathcal{P}(\mathbb{F}_{q}^{m})$,
where $m$ and $q$ are fixed. Let $\mathsf{X},\mathsf{Y}_{1},\dots,\mathsf{Y}_{k}$
be the corresponding random variables. The output at every receiver
is taken 
subject to some joint transfer probability distribution $p(Y_{1},Y_{2},\dots,Y_{k}|X)=p_{\mathsf{Y_{1}},\mathsf{Y}_{2},\dots,\mathsf{Y}_{k}|\mathsf{X}}(Y_{1},Y_{2},\dots,Y_{k}|X)=p(\mathsf{Y}_{1}=Y_{1},\mathsf{Y}_{2}=Y_{2},\dots,\mathsf{Y}_{k}=Y_{k}|\mathsf{X}=X)$.
Such a channel is called \emph{Linear Operator Broadcast Channel (LOBC)}.
For simplicity we restrict ourselves to a LOBC with two receivers
and let $\mathfrak{M}_{1}$, $\mathfrak{M}_{2}$ be the alphabets
of private messages for user $u_{1}$ and $u_{2}$, respectively. 
\begin{defn}
A \emph{broadcast (multishot) subspace code} of length $n$ for the
LOBC consists of a set $\mathfrak{C}\subseteq\mathfrak{X}^{n}$ of
codewords and a corresponding encoder/decoder pair. The LOBC encoder
$\gamma\colon\mathfrak{M}_{1}\times\mathfrak{M}_{2}\to\mathfrak{C}$
maps a message pair $(M_{1},M_{2})$ to a codeword $\mathbf{X}=(X_{0},\dots,X_{n-1})\in\mathfrak{C}$
(for every transmission generation). The LOBC decoder $\delta=(\delta_{1},\delta_{2})$
consists of two decoding functions $\delta_{i}\colon\mathfrak{Y}^{n}\to\mathfrak{M}_{i}$
($i=1,2$) and maps the corresponding pair $(\mathbf{Y}_{1},\mathbf{Y}_{2})\in\mathfrak{Y}^{n}\times\mathfrak{Y}^{n}$
of received words to the message pair $(\hat{M}_{1},\hat{M}_{2})=\bigl(\delta_{1}(\mathbf{Y}_{1}),\delta_{2}(\mathbf{Y}_{2})\bigr)$.

The rate pair $(R_{1},R_{2})$, in units of $q$-ary symbols per subspace
transmission, of the broadcast subspace code is defined as

\begin{equation}
R_{1}=\frac{\log_{q}|\mathfrak{M}_{1}|}{n},\quad R_{2}=\frac{\log_{q}|\mathfrak{M}_{2}|}{n}.\label{eq:LOBCRate}
\end{equation}

\end{defn}
As in \cite[Ch.~14.6]{cover1991elements} we can rewrite the encoding
map as 
\[
\gamma\colon(1,2,...,q^{nR_{1}})\times(1,2,...,q^{nR_{2}})\rightarrow\mathfrak{C}
\]
 and associate with the broadcast subspace code the parameters $((q^{nR_{1}},q^{nR_{2}}),n)$.

\subsection{Separation Vector for Broadcast Subspace Codes}

In what follows we view $\mathcal{P}(\mathbb{F}_{q}^{m})^{n}$ as
a metric space relative to the \emph{subspace distance} $d_{S}(\mathbf{X},\mathbf{Y})=\sum_{i=1}^{n}[\dim(X_{i}+Y_{i})-\dim(X_{i}\cap Y_{i})]$,
where $\mathbf{X}=(X_{1},X_{2},...,X_{n})\in\mathcal{P}(\mathbb{F}_{q}^{m})^{n}$,
$\mathbf{Y}=(Y_{1},Y_{2},...,Y_{n})\in\mathcal{P}(\mathbb{F}_{q}^{m})^{n}$. 
\begin{defn}
\label{def:seperationvector} Let $\mathfrak{C}\subseteq\mathcal{P}(\mathbb{F}_{q}^{m})^{n}$
be a multishot subspace code with (bijective) encoding map $\gamma\colon\mathfrak{M}_{1}\times\mathfrak{M}_{2}\to\mathfrak{C}$.
The \textit{separation vector} $\mathbf{s}=(s_{1},s_{2})\in\mathbb{N}^{2}$
($\mathbb{N}$ is the set of positive integers) of $\mathfrak{C}$
with respect to $\gamma$ is defined as 
\begin{align}
s_{1} & =\min_{\substack{M_{1}\neq M_{1}^{'}\\
M_{1},M_{1}^{'}\in\mathfrak{M}_{1}\\
M_{2},M_{2}^{'}\in\mathfrak{M}_{2}
}
}d_{S}\{\gamma(M_{1},M_{2}),\gamma(M_{1}^{'},M_{2}^{'})\}\nonumber \\
s_{2} & =\min_{\substack{M_{2}\neq M_{2}^{'}\\
M_{1},M_{1}^{'}\in\mathfrak{M}_{1}\\
M_{2},M_{2}^{'}\in\mathfrak{M}_{2}
}
}d_{S}\{\gamma(M_{1},M_{2}),\gamma(M_{1}^{'},M_{2}^{'})\}\label{eq:seperationvector}
\end{align}

\end{defn}
The separation vector is the key character to describe the error-correcting
capability of a broadcast subspace code, as indicated in the following 
\begin{lem}
Let an LOBC encoder $\gamma\colon\mathfrak{M}_{1}\times\mathfrak{M}_{2}\to\mathfrak{C}$
have separation vector $\mathbf{s}=(s_{1},s_{2})$ as defined above,
$\mathbf{X}=\gamma(M_{1},M_{2})$ the transmitted codeword, and $\mathbf{Y}\in\mathcal{P}(\mathbb{F}_{q}^{m})^{n}$
the received word. Then we have:

1). A minimum distance decoder at user $u_{1}$ ($u_{2}$) can recover
$M_{1}$ (resp.\ $M_{2}$) from $\mathbf{Y}$ if $2d_{S}(\mathbf{X},\mathbf{Y})<s_{1}$
(resp.\
 $2d_{S}(\mathbf{X},\mathbf{Y})<s_{2}$);

2). The minimum distance of $\mathfrak{C}$ is $\min\{s_{1},s_{2}\}$. 
\end{lem}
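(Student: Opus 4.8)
The plan is to treat both parts as standard consequences of the fact that the componentwise subspace distance is a genuine metric on $\mathcal{P}(\mathbb{F}_{q}^{m})^{n}$. Indeed, $d_{S}(X,Y)=\dim(X+Y)-\dim(X\cap Y)$ is a metric on each Grassmannian (Kötter--Kschischang), and summing over the $n$ coordinates preserves the metric axioms, so $d_{S}(\mathbf{X},\mathbf{Y})=\sum_{i=1}^{n}d_{S}(X_{i},Y_{i})$ obeys the triangle inequality. This triangle inequality is the only nontrivial analytic ingredient; everything else is bookkeeping against the definition of the separation vector.

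For part~1), I would model the minimum distance decoder at $u_{1}$ as follows: on input $\mathbf{Y}$ it selects a codeword $\hat{\mathbf{X}}=\gamma(\hat{M}_{1},\hat{M}_{2})$ minimizing $d_{S}(\hat{\mathbf{X}},\mathbf{Y})$ over $\mathfrak{C}$ and outputs $\delta_{1}(\mathbf{Y})=\hat{M}_{1}$. Assuming $2d_{S}(\mathbf{X},\mathbf{Y})<s_{1}$, I would argue by contradiction that $\hat{M}_{1}\neq M_{1}$. Then $\mathbf{X}=\gamma(M_{1},M_{2})$ and $\hat{\mathbf{X}}=\gamma(\hat{M}_{1},\hat{M}_{2})$ differ in their first message component, so the definition of $s_{1}$ forces $d_{S}(\mathbf{X},\hat{\mathbf{X}})\geq s_{1}$. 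Combining this with the triangle inequality and the minimality of $\hat{\mathbf{X}}$ (which gives $d_{S}(\mathbf{Y},\hat{\mathbf{X}})\leq d_{S}(\mathbf{Y},\mathbf{X})$) yields
\begin{equation*}
s_{1}\leq d_{S}(\mathbf{X},\hat{\mathbf{X}})\leq d_{S}(\mathbf{X},\mathbf{Y})+d_{S}(\mathbf{Y},\hat{\mathbf{X}})\leq 2d_{S}(\mathbf{X},\mathbf{Y})<s_{1},
\end{equation*}
a contradiction. Hence every closest codeword shares the first component $M_{1}$ with the transmitted word, and the decoder recovers $M_{1}$ regardless of how ties are resolved. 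The argument for $u_{2}$ is identical after exchanging the roles of the two indices.

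For part~2), I would unwind the definition of minimum distance $d(\mathfrak{C})=\min d_{S}(\mathbf{X},\mathbf{X}')$ taken over distinct codewords. Since $\gamma$ is a bijection, two codewords $\gamma(M_{1},M_{2})$ and $\gamma(M_{1}',M_{2}')$ are distinct exactly when $(M_{1},M_{2})\neq(M_{1}',M_{2}')$, i.e.\ when $M_{1}\neq M_{1}'$ or $M_{2}\neq M_{2}'$. I would therefore split the set of distinct codeword pairs into those differing in the first message, over which the minimum of $d_{S}$ is $s_{1}$ by definition, and those differing in the second message, over which it is $s_{2}$; these two families cover every distinct pair, so the overall minimum is $\min\{s_{1},s_{2}\}$.

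The main obstacle is not computational but conceptual: one must pin down the precise behavior of the minimum distance decoder and confirm that tie-breaking among equidistant codewords cannot defeat the recovery guarantee. The contradiction above handles this cleanly, since it rules out \emph{any} closest codeword with a wrong first component rather than merely the particular one the decoder selects. The only remaining care is to cite the metric property of $d_{S}$ so that the triangle inequality is legitimately available on the full product space $\mathcal{P}(\mathbb{F}_{q}^{m})^{n}$.
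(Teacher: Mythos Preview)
Your argument is correct and is exactly the standard one for this kind of statement. Note, however, that the paper does not actually supply a proof for this lemma; it is stated without proof, presumably because the authors regard it as an immediate consequence of the metric axioms and the definition of the separation vector. Your write-up fills in precisely the details one would expect: the triangle inequality for the extended subspace distance on $\mathcal{P}(\mathbb{F}_{q}^{m})^{n}$ for part~1), and the observation that the bijectivity of $\gamma$ lets you partition distinct codeword pairs according to which message coordinate differs for part~2). There is nothing to compare against, and nothing to fix.
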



Unlike coding for LOCs, it is clear that the performance of broadcast
subspace codes depends on both code and encoder map.

\section{Bounds on Broadcast Subspace Codes}

\subsection{Preparations}
\begin{lem}
\label{lem:HammingUpperBoundlem1} Let $\gamma\colon\mathfrak{M}_{1}\times\mathfrak{M}_{2}\to\mathfrak{C}$
be an LOBC encoder with separation vector $\mathbf{s}=(s_{1},s_{2})$,
where $s_{1}<s_{2}$. Then there exists an auxiliary code $\mathfrak{C}_{\text{aux}}\subseteq\mathfrak{C}$
and two encoding maps $\gamma_{1}:\mathfrak{M}_{1}\rightarrow\mathfrak{C}_{\text{aux}}$
and $\gamma_{2}:\mathfrak{C}_{\text{aux}}\times\mathfrak{M}_{2}\rightarrow\mathfrak{C}$
such that for any $(M_{1},M_{2})\in\mathfrak{M}_{1}\times\mathfrak{M}_{2}$,
$\gamma_{2}(\gamma_{1}(M_{1}),M_{2})=\gamma(M_{1},M_{2})$. 
\end{lem}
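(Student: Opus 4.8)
The plan is to realize the claimed two-layer (superposition) structure by selecting one representative codeword from each ``cloud'' of codewords that share a common first message, and to let these representatives form the auxiliary code. Concretely, I would fix once and for all a reference message $M_2^{*}\in\mathfrak{M}_2$ and set
\[
\gamma_1(M_1)=\gamma(M_1,M_2^{*}),\qquad \mathfrak{C}_{\text{aux}}=\gamma_1(\mathfrak{M}_1)=\{\gamma(M_1,M_2^{*}):M_1\in\mathfrak{M}_1\}\subseteq\mathfrak{C}.
\]
Thus each auxiliary codeword is an honest codeword of $\mathfrak{C}$, as the statement requires.

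First I would verify that $\gamma_1$ is injective. Since $\gamma$ is a bijective encoding map (Definition~\ref{def:seperationvector}), distinct first messages $M_1\neq M_1'$ yield distinct pairs $(M_1,M_2^{*})\neq(M_1',M_2^{*})$ and hence $\gamma_1(M_1)\neq\gamma_1(M_1')$. Therefore $\gamma_1\colon\mathfrak{M}_1\to\mathfrak{C}_{\text{aux}}$ is a bijection onto its image and its inverse $\gamma_1^{-1}\colon\mathfrak{C}_{\text{aux}}\to\mathfrak{M}_1$ is well defined. This is precisely the point at which well-definedness must be checked, and I expect it to be the only genuine obstacle: everything downstream is structural.

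I would then \emph{define} the second-layer encoder by
\[
\gamma_2(\mathbf{c},M_2)=\gamma\bigl(\gamma_1^{-1}(\mathbf{c}),M_2\bigr),\qquad \mathbf{c}\in\mathfrak{C}_{\text{aux}},\ M_2\in\mathfrak{M}_2,
\]
which makes sense exactly because $\gamma_1^{-1}$ recovers the first message unambiguously from the auxiliary codeword. The required identity is then immediate: for every $(M_1,M_2)$,
\[
\gamma_2\bigl(\gamma_1(M_1),M_2\bigr)=\gamma\bigl(\gamma_1^{-1}(\gamma_1(M_1)),M_2\bigr)=\gamma(M_1,M_2).
\]
In passing one checks that $\gamma_2$ is injective as well (again from injectivity of $\gamma$ and of $\gamma_1^{-1}$), so it is a legitimate encoder.

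Finally I would record why $s_1<s_2$ is the natural hypothesis, since it is what the subsequent bound exploits rather than the existence proof itself. By the definition of the separation vector, any two codewords with distinct first messages lie at distance at least $s_1$, so $\mathfrak{C}_{\text{aux}}$ (whose codewords carry pairwise distinct $M_1$) inherits minimum distance at least $s_1$; dually, the codewords inside a single cloud $\{\gamma(M_1,M_2):M_2\in\mathfrak{M}_2\}$ differ only in $M_2$ and are therefore at distance at least $s_2$. With $s_1<s_2$ this produces a genuinely nested packing: coarse outer separation $s_1$ for the auxiliary (user $u_1$) layer, and finer inner separation $s_2$ within each cloud for the user $u_2$ layer, which is the structure a Hamming-type counting argument will use. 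The decomposition itself, however, holds for any fixed choice of $M_2^{*}$ and does not require the strict inequality.
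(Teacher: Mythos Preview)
Your proof is correct and follows essentially the same route as the paper: fix a reference second message, take the slice $\gamma(\mathfrak{M}_1,M_2^{*})$ as $\mathfrak{C}_{\text{aux}}$, observe that $\gamma_1$ is a bijection because $\gamma$ is, and define $\gamma_2$ through $\gamma_1^{-1}$. The only difference is that the paper additionally chooses the particular $M_2^{*}$ (relabelled as $1$) so that the overall minimum distance $s_1$ of $\mathfrak{C}$ is actually attained inside $\mathfrak{C}_{\text{aux}}$; you correctly observe that the bare factorization needs no such choice and does not even use $s_1<s_2$. One caution about your closing remarks: in the paper's subsequent development the ``clouds'' are indexed by $M_2$ (i.e.\ $\mathfrak{C}_{\text{cld}}(M_2)=\gamma(\mathfrak{M}_1,M_2)$), so intra-cloud distance is $\ge s_1$ and inter-cloud distance is $\ge s_2$---the reverse of the labeling in your last paragraph.
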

\begin{proof} Without loss of generality, we assume $\mathfrak{M}_{1}=\{1,2,....,|\mathfrak{M}_{1}|\}$,
$\mathfrak{M}_{2}=\{1,2,....,|\mathfrak{M}_{2}|\}$ and that the minimum
distance of $\mathfrak{C}$ is attained between two codewords of the
form $\gamma(M_{1},1)$, $\gamma(M_{1}',1)$. Let $\mathfrak{C}_{\text{aux}}=\gamma(\mathfrak{M}_{1},1)$
and $\gamma_{1}(M_{1})=\gamma(M_{1},1)$ for $M_{1}\in\mathfrak{M}_{1}$.
 Since $\gamma_{1}$ is bijective, the map $\gamma_{2}$ is then uniquely
defined by the condition $\gamma_{2}(\gamma_{1}(M_{1}),M_{2})=\gamma(M_{1},M_{2})$.
\end{proof} 
Let $\mathfrak{C}_{\text{cld}}(M_{2})=\gamma(\mathfrak{M}_{1},M_{2})=\gamma_{2}(\mathfrak{C}_{\text{aux}},M_{2})$
be the codeword cluster ({}``cloud'') corresponding to the message
$M_{2}$, so that $\mathfrak{C}=\biguplus\{\mathfrak{C}_{\text{cld}}(M_{2});M_{2}\in\mathfrak{M}_{2}\}$.
For further discussion, we choose a unique representative codeword
$\mathbf{C}_{\text{cldcen}}(M_{2})\in\mathfrak{C}_{\text{cld}}(M_{2})$
to denote the center of $\mathfrak{C}_{\text{cld}}(M_{2})$ and let
$\mathfrak{C}_{\text{cldcen}}=\{\mathbf{C}_{\text{cldcen}}(M_{2})|M_{2}\in\mathfrak{M}_{2}\}$.
Additionally we choose $\mathfrak{C}_{\text{cldcen}}$ to be of minimum
distance $s_{2}$ (which can clearly be done). 
\begin{lem}
\label{lem:HammingUpperBoundlem2} Let $\gamma\colon\mathfrak{M}_{1}\times\mathfrak{M}_{2}\to\mathfrak{C}$
an LOBC encoder with separation vector $\mathbf{s}=(s_{1},s_{2})$,
$s_{1}<s_{2}$, and $\gamma_{1}:\mathfrak{M}_{1}\rightarrow\mathfrak{C}_{aux}$
and $\gamma_{2}:\mathfrak{C}_{aux}\times\mathfrak{M}_{2}\rightarrow\mathfrak{C}$
be defined as above. 
 Then

1) The clouds $\mathfrak{C}_{\text{cld}}(M_{2})$, $M_{2}\in\mathfrak{M}_{2}$
are subspace codes of minimum distance $\geq s_{1}$.

2) The subspace distance between codewords in different clouds $\mathfrak{C}_{\text{cld}}(M_{2})$
and $\mathfrak{C}_{\text{cld}}(M_{2}')$, $M_{2}\neq M_{2}'$, is
at least $s_{2}$. 
\end{lem}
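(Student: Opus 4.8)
The plan is to read off both parts directly from the definition of the separation vector $\mathbf{s}=(s_1,s_2)$ in Definition~\ref{def:seperationvector}, using only that $\gamma$ is bijective. The guiding observation is that the two minimizations defining $s_1$ and $s_2$ classify pairs of codewords according to whether their first or their second message component differs, and the two cloud statements correspond exactly to these two cases. Each cloud $\mathfrak{C}_{\text{cld}}(M_2)$ is by construction a subset of $\mathcal{P}(\mathbb{F}_q^m)^n$ and hence a subspace code; only the distance bounds require an argument.

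For part 1), I would fix $M_2\in\mathfrak{M}_2$ and take two distinct codewords $\mathbf{X},\mathbf{X}'\in\mathfrak{C}_{\text{cld}}(M_2)$, writing $\mathbf{X}=\gamma(M_1,M_2)$ and $\mathbf{X}'=\gamma(M_1',M_2)$. Since the second components already coincide while $\mathbf{X}\neq\mathbf{X}'$, bijectivity of $\gamma$ forces $M_1\neq M_1'$. The pair therefore meets the constraint $M_1\neq M_1'$ appearing in the definition of $s_1$, so $d_S(\mathbf{X},\mathbf{X}')\geq s_1$. As $\mathbf{X},\mathbf{X}'$ were arbitrary distinct codewords of the cloud, its minimum distance is at least $s_1$.

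For part 2), I would take arbitrary $\mathbf{X}=\gamma(M_1,M_2)\in\mathfrak{C}_{\text{cld}}(M_2)$ and $\mathbf{X}'=\gamma(M_1',M_2')\in\mathfrak{C}_{\text{cld}}(M_2')$ with $M_2\neq M_2'$. The associated message pairs satisfy precisely the constraint $M_2\neq M_2'$ defining $s_2$, with no condition imposed on $M_1,M_1'$, whence $d_S(\mathbf{X},\mathbf{X}')\geq s_2$ immediately. Since $\mathbf{X},\mathbf{X}'$ range over all codewords of the two distinct clouds, this is the assertion.

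I do not expect a genuine obstacle, since both statements are essentially restatements of the defining minimizations; the single point deserving care is the bijectivity step in part 1), where one must confirm that two distinct codewords carrying the same cloud label $M_2$ necessarily differ in their $M_1$-component---this is exactly where invertibility of $\gamma$ (equivalently of $\gamma_1$) is used. I also note that the standing hypothesis $s_1<s_2$ and the cloud-center construction preceding the lemma are not needed for this result; they merely fix the notation carried over from Lemma~\ref{lem:HammingUpperBoundlem1}.
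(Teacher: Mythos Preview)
Your proposal is correct: both parts follow immediately from the minimizations in Definition~\ref{def:seperationvector}, and the only subtlety---that two distinct codewords in the same cloud must differ in their $M_1$-component---is handled by bijectivity of $\gamma$ exactly as you note. The paper in fact states this lemma without proof, so your argument is precisely the straightforward verification the authors evidently regard as routine.
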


\begin{rem}
To encode messages for a LOBC, we could start by constructing an intermediate
auxiliary code $\mathfrak{C}_{\text{aux}}$ with minimum subspace
distance $s_{1}$, and {}``translate'' the codewords of $\mathfrak{C}_{\text{aux}}$
in some way depending on the message $M_{2}$. For example, in the
case of one-shot codes ($n=1$) we can identify $\mathbb{F}_{q}^{m}$
with the extension field $\mathbb{F}_{q^{m}}$ and use a primitive
element $\alpha$ of $\mathbb{F}_{q^{m}}$ to translate the codewords,
i.e.\ we set $\gamma_{2}(\mathbf{C},j)=\alpha^{j}\mathbf{C}$, where
now $\mathfrak{M}_{2}=\{0,1,\dots,q^{m}-2\}$ (or a suitable subset
thereof). The codeword clouds $\mathfrak{C}_{\text{cld}}(M_{2})$,
$M_{2}\in\mathfrak{M}_{2}$, must then be chosen subject to $d_{S}\bigl(\mathfrak{C}_{\text{cld}}(M_{2}),\mathfrak{C}_{\text{cld}}(M_{2}')\bigr)=\min\bigl\{ d_{S}(\mathbf{C},\mathbf{C}');\mathbf{C}\in\mathfrak{C}_{\text{cld}}(M_{2}),\mathbf{C}'\in\mathfrak{C}_{\text{cld}}(M_{2}')\bigr\}\geq s_{2}$.
In the special case of the {}``Singer cycle construction'' outlined
above this reduces to a condition on the auxiliary code $\mathfrak{C}_{\text{aux}}$,
and the coding procedure reflects the principle of superposition coding
for broadcast channel \cite{720547}.

The relationship between $\mathfrak{C}$, $\mathfrak{C}_{\text{aux}}$
and the cloud centers $\mathbf{C}_{\text{cldcen}}(M_{2})$ is illustrated
in Fig.~\ref{Flo:SphereBounds}. The overall space $\mathcal{P}(\mathbb{F}_{q}^{m})^{n}$
is the ball covered by the largest sphere, the small circles denote
spheres with radius $\left\lfloor \frac{s_{1}-1}{2}\right\rfloor $
in a fixed cluster $\mathfrak{C}_{\text{cld}}(M_{2})$, and the dotted
circles with radius $r_{2}$ denote larger spheres around $\mathbf{C}_{\text{cldcen}}(M_{2})$
matching the error-correcting capabilities of the code $\mathfrak{C}_{\text{cldcen}}$. 
\end{rem}
\begin{figure}
\centering{}\includegraphics[scale=0.75]{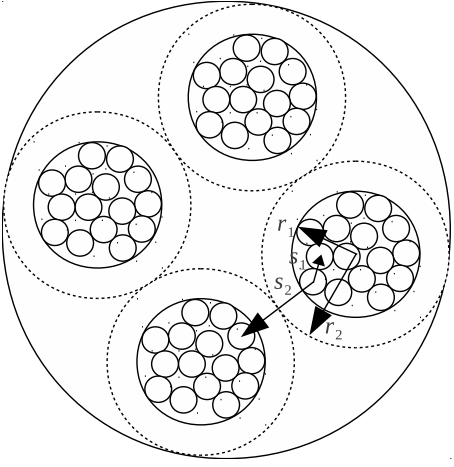}\caption{Sphere Packing of Broadcast Subspace Codes}

\label{Flo:SphereBounds} 
\end{figure}

\subsection{Sphere Packing Bound for General LOBCs}

The ball centered at $\mathbf{V}\in\mathcal{P}(\mathbb{F}_{q}^{m})^{n}$
with radius $r$ is defined as 
\[
\mathcal{B}_{r}(\mathbf{V})=\{\mathbf{U}\in\mathcal{P}(\mathbb{F}_{q}^{m})^{n}|d_{S}(\mathbf{U},\mathbf{V})\leq r\},
\]
 where $d_{S}(\mathbf{U},\mathbf{V})=\sum_{i=1}^{n}d_{S}(U_{i},V_{i})$.
(We omit the symbols $q$ and $m$ in the expression because they
are constant parameters.) The sphere centered at $\mathbf{V}\in\mathcal{P}(\mathbb{F}_{q}^{m})^{n}$
with radius $r$ is defined as 
\[
\mathcal{S}_{r}(\mathbf{V})=\{\mathbf{U}\in\mathcal{P}(\mathbb{F}_{q}^{m})^{n}|d_{S}(\mathbf{U},\mathbf{V})=r\}.
\]
 The volumes of $\mathcal{B}_{r}(\mathbf{V})$ and $\mathcal{S}_{r}(\mathbf{V})$
are defined as 
\[
\text{Vol}(\mathcal{B}_{r}(\mathbf{V}))=|\mathcal{B}_{r}(\mathbf{V})|
\]
 and 
\[
\text{Vol}(\mathcal{S}_{r}(\mathbf{V}))=|\mathcal{S}_{r}(\mathbf{V})|,
\]
 respectively. From \cite{5205750} we know that the volume of $\mathcal{B}_{r}(\mathbf{V})$
only depends on $\mathbf{k}=(\dim V_{1},\dim V_{2},...,\dim V_{n})$
and 
\begin{align*}
\text{Vol}(\mathcal{B}_{r}(\mathbf{V})) & =|\mathcal{B}_{r}(\mathbf{k})|\\
 & =\sum_{\substack{\mathbf{h}\in\{0,1,...,m\}^{n}\\
h_{1}+h_{2}+...+h_{n}\leq r
}
}\prod_{i=1}^{n}\text{Vol}(\mathcal{S}_{h_{i}}(k_{i})),
\end{align*}
 where 
\[
\text{Vol}(\mathcal{S}_{h_{i}}(k_{i}))=\sum_{j=0}^{h_{i}}\dbinom{m-k_{i}}{h_{i}-j}_{q}\dbinom{k_{i}}{j}_{q}q^{j(h_{i}-j)}.
\]
 Since $\mathcal{B}_{r}(\mathbf{V})$ varies in general, we need to
compute the average volume $\text{Vol}^{\text{avg}}[\mathcal{B}_{r}]$
of a ball with radius $r$ in the $\mathcal{P}(\mathbb{F}_{q}^{m})^{n}$
by

\begin{align*}
\text{Vol}^{\text{avg}}[\mathcal{B}_{r}] & =\frac{1}{|\mathcal{P}(\mathbb{F}_{q}^{m})|^{n}}\sum_{\mathbf{V}\in\mathcal{P}(\mathbb{F}_{q}^{m})^{n}}\text{Vol}(\mathcal{B}_{r}(\mathbf{V}))\\
 & =\frac{1}{|\mathcal{P}(\mathbb{F}_{q}^{m})|^{n}}\sum_{\mathbf{k}\in\{0,1,...,m\}^{n}}\dbinom{m}{k_{1}}_{q}\cdots\dbinom{m}{k_{n}}_{q}\text{Vol}(\mathcal{B}_{r}(\mathbf{k}))
\end{align*}
 
\begin{defn}
\label{defn:neighbourhood} The \emph{$r$-neighborhood} of a code
$\mathfrak{S}\subseteq\mathcal{P}(\mathbb{F}_{q}^{m})^{n}$ is defined
as the union of all balls $\mathcal{B}_{r}(\mathbf{V})$ with $\mathbf{V}\in\mathfrak{S}$.
The minimum volume of an $r$-neighborhood of $\mathfrak{S}\subseteq\mathcal{P}(\mathbb{F}_{q}^{m})^{n}$
with $|\mathfrak{S}|=N$ and $d_{S}(\mathfrak{S})\geq d$ is denoted
by $T_{n}(d,N,r)$. \end{defn}
\begin{thm}
For the parameters of an LOBC encoder $\gamma\colon\mathfrak{M}_{1}\times\mathfrak{M}_{2}\to\mathfrak{C}$
with separation vector $\mathbf{s}=(s_{1},s_{2})$, $s_{1}<s_{2}$,
as defined above we have the bound 
 
\begin{equation}
T_{n}\left(s_{1},|\mathfrak{M}_{1}|,\left\lfloor \tfrac{s_{2}-1}{2}\right\rfloor \right)\cdot|\mathfrak{M}_{2}|\leq|\mathcal{P}(\mathbb{F}_{q}^{m})|^{n}.\label{eq:SphereBound}
\end{equation}

\end{thm}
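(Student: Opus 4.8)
The plan is to exploit the cloud decomposition $\mathfrak{C}=\biguplus\{\mathfrak{C}_{\text{cld}}(M_{2});M_{2}\in\mathfrak{M}_{2}\}$ together with Lemma~\ref{lem:HammingUpperBoundlem2} and a disjoint sphere-packing argument inside $\mathcal{P}(\mathbb{F}_{q}^{m})^{n}$. Write $r_{2}=\lfloor(s_{2}-1)/2\rfloor$. Since $\gamma$ is bijective, fixing $M_{2}$ makes $M_{1}\mapsto\gamma(M_{1},M_{2})$ injective, so each cloud $\mathfrak{C}_{\text{cld}}(M_{2})$ contains exactly $|\mathfrak{M}_{1}|$ codewords; and by part~1) of Lemma~\ref{lem:HammingUpperBoundlem2} each cloud has minimum subspace distance at least $s_{1}$. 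Hence each cloud is an admissible competitor in the minimization defining $T_{n}$, so by Definition~\ref{defn:neighbourhood} the volume of its $r_{2}$-neighborhood $\bigcup_{\mathbf{C}\in\mathfrak{C}_{\text{cld}}(M_{2})}\mathcal{B}_{r_{2}}(\mathbf{C})$ is at least $T_{n}(s_{1},|\mathfrak{M}_{1}|,r_{2})$.

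The crux of the argument is to show that the $r_{2}$-neighborhoods of distinct clouds are pairwise disjoint. First I would invoke that $d_{S}$ is a genuine metric on $\mathcal{P}(\mathbb{F}_{q}^{m})^{n}$, so that the triangle inequality is available. Suppose some $\mathbf{U}$ lay in both the $r_{2}$-neighborhood of $\mathfrak{C}_{\text{cld}}(M_{2})$ and that of $\mathfrak{C}_{\text{cld}}(M_{2}')$ with $M_{2}\neq M_{2}'$; then there would exist $\mathbf{C}\in\mathfrak{C}_{\text{cld}}(M_{2})$ and $\mathbf{C}'\in\mathfrak{C}_{\text{cld}}(M_{2}')$ with $d_{S}(\mathbf{U},\mathbf{C})\leq r_{2}$ and $d_{S}(\mathbf{U},\mathbf{C}')\leq r_{2}$, whence $d_{S}(\mathbf{C},\mathbf{C}')\leq 2r_{2}\leq s_{2}-1<s_{2}$. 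This contradicts part~2) of Lemma~\ref{lem:HammingUpperBoundlem2}, which asserts that codewords in different clouds are at subspace distance at least $s_{2}$.

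With disjointness in hand, the final step is a volume count. The $|\mathfrak{M}_{2}|$ neighborhoods are pairwise disjoint subsets of $\mathcal{P}(\mathbb{F}_{q}^{m})^{n}$, so summing their volumes gives
\[
\sum_{M_{2}\in\mathfrak{M}_{2}}\text{Vol}\Bigl(\bigcup_{\mathbf{C}\in\mathfrak{C}_{\text{cld}}(M_{2})}\mathcal{B}_{r_{2}}(\mathbf{C})\Bigr)\leq|\mathcal{P}(\mathbb{F}_{q}^{m})|^{n}.
\]
Bounding each summand below by $T_{n}(s_{1},|\mathfrak{M}_{1}|,r_{2})$ and noting there are exactly $|\mathfrak{M}_{2}|$ clouds yields $|\mathfrak{M}_{2}|\cdot T_{n}(s_{1},|\mathfrak{M}_{1}|,r_{2})\leq|\mathcal{P}(\mathbb{F}_{q}^{m})|^{n}$, which is precisely~\eqref{eq:SphereBound}.

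The only genuinely delicate point is the disjointness step, and within it the strict inequality $2r_{2}<s_{2}$: the floor in the definition of $r_{2}$ is exactly what guarantees $2\lfloor(s_{2}-1)/2\rfloor\leq s_{2}-1$, so that representatives drawn from two different clouds can never share an $r_{2}$-ball. Everything else is bookkeeping---confirming that each cloud has the correct cardinality $|\mathfrak{M}_{1}|$ and minimum distance $\geq s_{1}$ to qualify as an admissible code in the definition of $T_{n}$, and observing that disjointness promotes the union bound into an exact additive volume count.
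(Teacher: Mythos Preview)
Your proof is correct and follows essentially the same approach as the paper: bound each cloud's $r_{2}$-neighborhood below by $T_{n}(s_{1},|\mathfrak{M}_{1}|,r_{2})$ using Lemma~\ref{lem:HammingUpperBoundlem2}\,1), show the neighborhoods of distinct clouds are disjoint via Lemma~\ref{lem:HammingUpperBoundlem2}\,2) and the triangle inequality, and sum. The paper's proof is simply a condensed version of what you wrote.
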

\begin{proof} The clouds $\mathfrak{C}_{\text{cld}}(M_{2})$ have
size $|\mathfrak{M}_{1}|$ and minimum subspace distance at least
$s_{1}$, and hence the volumes of their $r$-neighborhoods are lower-bounded
by $T_{n}(s_{1},|\mathfrak{M}_{1}|,r)$. Further, since clouds corresponding
to different messages $M_{2}$ have distance at least $s_{2}$, their
$\left\lfloor \frac{s_{2}-1}{2}\right\rfloor $-neighborhoods are
pairwise disjoint. This gives \eqref{eq:SphereBound}.\end{proof}



\begin{rem}
The bound \eqref{eq:SphereBound} is similar to the Hamming bound
for linear binary UEP codes derived in \cite{1054054,1056327}. The
numbers $T_{n}(d,N,r)$ are difficult to compute, so that we don't
have an explicit bound as a function of $|\mathfrak{M}_{1}|$ and
$|\mathfrak{M}_{2}|$. (We can use the trivial bound $T_{n}(d,N,r)\geq N\cdot\mathrm{Vol}^{\text{min}}[\mathcal{B}_{(d-1)/2}]$
to convert \eqref{eq:SphereBound} into such an explicit bound, which
is however independent of $s_{2}$.) Using a more elaborate argument,
a lower bound (Varshamov-Gilbert like bound) can also be obtained.
 
\end{rem}

\subsection{Sphere Packing Bound for Constant Dimension LOBCs}

For constant-dimension linear operator broadcast channels, the code
$\mathfrak{C}$ is a subset of the $n$-fold cartesian product of
the Grassmannian $\mathcal{P}(\mathbb{F}_{q}^{m},l)$ consisting of
all $l$-dimensional $\mathbb{F}_{q}$-subspaces of $\mathbb{F}_{q}^{m}$,
where $l\in\{0,1,\dots,m\}$ is some fixed integer; that is $\mathfrak{C}\subseteq\mathcal{P}(\mathbb{F}_{q}^{m},l)^{n}$.
The sphere packing bound on constant dimension LOBCs is presented
in Corollary \ref{cor:SphereBoundCMLOBC}, which depends on the following
fact. 
\begin{fact}
\cite[Lem 4]{4567581} The Gaussian coefficient $\binom{n}{l}_{q}$
satisfies 
\[
q^{l(n-l)}<\binom{n}{l}_{q}<4q^{l(n-l)}
\]
 for $0<l<n$.\end{fact}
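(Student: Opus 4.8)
The plan is to work directly from the product formula
$$\binom{n}{l}_q = \prod_{i=0}^{l-1}\frac{q^{n-i}-1}{q^{l-i}-1},$$
treating the two inequalities separately. For the lower bound I would observe that each of the $l$ factors exceeds $q^{n-l}$: indeed, for $a > b \geq 1$ one has $q^{a}-1 > q^{a-b}(q^{b}-1)$, since this is equivalent to $q^{a-b} > 1$, which holds because $q \geq 2$. Applying this with $a = n-i$ and $b = l-i$ shows each factor is strictly larger than $q^{(n-i)-(l-i)} = q^{n-l}$, and multiplying the $l$ factors yields $\binom{n}{l}_q > q^{l(n-l)}$.

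For the upper bound I would first extract the dominant power of $q$ by writing each factor as
$$\frac{q^{n-i}-1}{q^{l-i}-1} = q^{\,n-l}\cdot\frac{1 - q^{-(n-i)}}{1 - q^{-(l-i)}},$$
so that, bounding each numerator by $1$ and reindexing with $j = l-i$,
$$\binom{n}{l}_q = q^{l(n-l)}\prod_{i=0}^{l-1}\frac{1 - q^{-(n-i)}}{1 - q^{-(l-i)}} < q^{l(n-l)}\prod_{j=1}^{l}\frac{1}{1 - q^{-j}} \leq q^{l(n-l)}\prod_{j=1}^{\infty}\frac{1}{1 - q^{-j}},$$
where the last step extends the finite product to an infinite one, which only increases it. The claim now reduces to showing that this infinite product is smaller than $4$ for every $q \geq 2$.

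The main obstacle—indeed the only nonelementary point, since everything else is factor-by-factor comparison—is this last estimate. I would handle it by monotonicity together with a tail bound. Each factor $(1 - q^{-j})^{-1}$ is decreasing in $q$, so the product is maximal at $q = 2$, and it suffices to prove $\prod_{j=1}^{\infty}(1-2^{-j})^{-1} < 4$. For this I would compute the partial product over $j = 1,\dots,4$ explicitly, obtaining $\tfrac{1024}{315} < 3.26$, and control the tail via the elementary inequality $-\ln(1-x) \leq x/(1-x)$, which gives $\prod_{j\geq 5}(1-2^{-j})^{-1} \leq \exp\bigl(\sum_{j\geq 5}(2^{j}-1)^{-1}\bigr)$. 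Since $(2^{j}-1)^{-1} < 2^{-(j-1)}$ for $j \geq 2$, the exponent is bounded by $\sum_{j\geq 5} 2^{-(j-1)} = \tfrac18$, so the tail contributes a factor at most $e^{1/8} < 1.14$. Hence the product is $< 3.26 \cdot 1.14 < 4$, with comfortable margin, the limiting value being $\approx 3.46$. I would close by noting that this construction-independent inequality is exactly what is invoked in Corollary~\ref{cor:SphereBoundCMLOBC}, so only the stated constants $1$ and $4$ are needed.
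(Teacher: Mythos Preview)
Your proof is correct. Note, however, that the paper does not supply its own proof of this fact: it is quoted verbatim as \cite[Lem.~4]{4567581} (Koetter--Kschischang) and used as a black box in Corollary~\ref{cor:SphereBoundCMLOBC}. The argument you give---factor-by-factor comparison for the lower bound, extraction of $q^{l(n-l)}$ followed by the estimate $\prod_{j\geq 1}(1-q^{-j})^{-1}<4$ for the upper bound---is the standard one and matches what is done in the cited reference, so there is nothing further to compare.
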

\begin{cor}
\label{cor:SphereBoundCMLOBC} Let $\gamma\colon\mathfrak{M}_{1}\times\mathfrak{M}_{2}\to\mathfrak{C}$
be a constant-dimension LOBC encoder with separation vector $\mathbf{s}=(s_{1},s_{2})$,
where $\mathfrak{C}\subseteq\mathcal{P}(\mathbb{F}_{q}^{m},l)^{n}$
and $s_{1}<s_{2}$, as defined above. 
 Then 
 
\begin{equation}
|\mathfrak{M}_{2}|<\frac{4^{n}q^{nl(m-l)}}{T_{n,l}\left(s_{1},|\mathfrak{M}_{1}|,\left\lfloor \tfrac{s_{2}-1}{2}\right\rfloor \right)},\label{eq:CLOBCSphereBound}
\end{equation}
 where $T_{n,l}(d,N,r)$ is the minimum volume of an $r$-neighborhood
in $\mathcal{P}(\mathbb{F}_{q}^{m},l)^{n}$ of a constant-dimension
code $\mathfrak{S}\subseteq\mathcal{P}(\mathbb{F}_{q}^{m},l)^{n}$
of size $N$ and minimum subspace distance $\geq d$. 
\end{cor}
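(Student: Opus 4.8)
The plan is to replay the proof of the preceding Theorem inside the Grassmannian product $\mathcal{P}(\mathbb{F}_{q}^{m},l)^{n}$ and then convert the resulting Gaussian-coefficient bound into an exponential one by means of the Fact. First I would invoke Lemma~\ref{lem:HammingUpperBoundlem2}: in the constant-dimension setting the clouds $\mathfrak{C}_{\text{cld}}(M_{2})$ are still (constant-dimension) subspace codes of size $|\mathfrak{M}_{1}|$ and minimum subspace distance at least $s_{1}$, so by the definition of $T_{n,l}$ the volume of the $r$-neighborhood of each cloud, with $r=\left\lfloor\tfrac{s_{2}-1}{2}\right\rfloor$, is bounded below by $T_{n,l}(s_{1},|\mathfrak{M}_{1}|,r)$. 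Because codewords in distinct clouds are at distance at least $s_{2}\geq 2r+1$, these $r$-neighborhoods are pairwise disjoint.

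Next I would observe that all of these neighborhoods are contained in the ambient space $\mathcal{P}(\mathbb{F}_{q}^{m},l)^{n}$; this is automatic, since $T_{n,l}$ is defined relative to balls taken inside $\mathcal{P}(\mathbb{F}_{q}^{m},l)^{n}$. Counting points then yields the intermediate sphere-packing inequality $T_{n,l}(s_{1},|\mathfrak{M}_{1}|,r)\cdot|\mathfrak{M}_{2}|\leq|\mathcal{P}(\mathbb{F}_{q}^{m},l)|^{n}=\binom{m}{l}_{q}^{n}$, which is the exact analogue of \eqref{eq:SphereBound} with the full ambient space replaced by the Grassmannian product.

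Finally I would apply the Fact. Assuming $0<l<m$ (the cases $l=0$ and $l=m$ being degenerate, with a one-point Grassmannian), the Fact gives $\binom{m}{l}_{q}<4q^{l(m-l)}$, hence $\binom{m}{l}_{q}^{n}<4^{n}q^{nl(m-l)}$. Dividing the intermediate inequality by $T_{n,l}(s_{1},|\mathfrak{M}_{1}|,r)$ and substituting this strict upper bound for $\binom{m}{l}_{q}^{n}$ produces exactly \eqref{eq:CLOBCSphereBound}.

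I do not anticipate a genuine obstacle, since the argument is essentially a transcription of the Theorem's proof with the Grassmannian product in place of $\mathcal{P}(\mathbb{F}_{q}^{m})^{n}$. The only point requiring a little care is the passage from the exact count $\binom{m}{l}_{q}^{n}$ to the strict exponential bound: this is where the boundary cases $l\in\{0,m\}$ must be excluded so that the Fact is applicable, and it is also the source of the strict inequality in \eqref{eq:CLOBCSphereBound} (whereas the intermediate counting step only gives ``$\leq$'').
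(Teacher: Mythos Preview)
Your proposal is correct and matches the paper's intended argument: the corollary is stated without proof, but the text explicitly says it ``depends on the following fact'' after the general Theorem, so the intended proof is exactly to rerun the sphere-packing argument inside $\mathcal{P}(\mathbb{F}_{q}^{m},l)^{n}$ and then replace $\binom{m}{l}_{q}^{n}$ by the strict upper bound $4^{n}q^{nl(m-l)}$ from the Fact. Your remarks on the source of the strict inequality and the degenerate cases $l\in\{0,m\}$ are appropriate refinements.
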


\section{Conclusion}

In this paper, we have converted the problem of superposition coding
for LOBCs into finding multishot subspace codes with required separation
vectors. We have derived sphere packing bounds for general broadcast
subspace codes and for those of constant dimension. The central problem
left for future work will be computing or at least bounding the volume
of an $r$-neighborhood of $\mathfrak{S}\subseteq\mathcal{P}(\mathbb{F}_{q}^{m})^{n}$.
Since we lack a concept of linearity on cartesian products of projective
spaces, multilevel constructions (as suggested in \cite{243441})
of broadcast subspace codes may be a useful alternative to approach
the bound.

\bibliographystyle{IEEEtran} 
 \bibliographystyle{ieeetr}
\bibliography{mathe,MyRef,StdRef,strings}

\begin{thebibliography}{10}
\providecommand{\url}[1]{#1}
\csname url@samestyle\endcsname
\providecommand{\newblock}{\relax}
\providecommand{\bibinfo}[2]{#2}
\providecommand{\BIBentrySTDinterwordspacing}{\spaceskip=0pt\relax}
\providecommand{\BIBentryALTinterwordstretchfactor}{4}
\providecommand{\BIBentryALTinterwordspacing}{\spaceskip=\fontdimen2\font plus
\BIBentryALTinterwordstretchfactor\fontdimen3\font minus
  \fontdimen4\font\relax}
\providecommand{\BIBforeignlanguage}[2]{{%
\expandafter\ifx\csname l@#1\endcsname\relax
\typeout{** WARNING: IEEEtran.bst: No hyphenation pattern has been}%
\typeout{** loaded for the language `#1'. Using the pattern for}%
\typeout{** the default language instead.}%
\else
\language=\csname l@#1\endcsname
\fi
#2}}
\providecommand{\BIBdecl}{\relax}
\BIBdecl

\bibitem{1705002}
T.~Ho, M.~Medard, R.~Koetter, D.~Karger, M.~Effros, J.~Shi, and B.~Leong, ``A
  random linear network coding approach to multicast,'' \emph{Information
  Theory, IEEE Transactions on}, vol.~52, no.~10, pp. 4413 --4430, {O}ct. 2006.

\bibitem{4567581}
R.~Koetter and F.~Kschischang, ``Coding for errors and erasures in random
  network coding,'' \emph{Information Theory, IEEE Transactions on}, vol.~54,
  no.~8, pp. 3579 --3591, {A}ug. 2008.

\bibitem{Pangb}
\BIBentryALTinterwordspacing
Y.~Pang and T.~Honold, ``Towards the capacity region of multiplicative linear
  operator broadcast channels,'' \emph{submitted}, 2010. [Online]. Available:
  \url{http://arxiv.org/abs/1012.5774}
\BIBentrySTDinterwordspacing

\bibitem{4608992}
D.~Silva, F.~Kschischang, and R.~Koetter, ``A rank-metric approach to error
  control in random network coding,'' \emph{Information Theory, IEEE
  Transactions on}, vol.~54, no.~9, pp. 3951 --3967, {S}ept. 2008.

\bibitem{5429134}
D.~Silva, F.~Kschischang, and R.~Kotter, ``Communication over finite-field
  matrix channels,'' \emph{Information Theory, IEEE Transactions on}, vol.~56,
  no.~3, pp. 1296 --1305, {M}ar. 2010.

\bibitem{uchoa2010capacity}
B.~Uch{\^o}a-Filho and R.~N{\'o}brega, ``The capacity of random linear coding
  networks as subspace channels,'' \emph{Arxiv preprint arXiv:1001.1021}, 2010.

\bibitem{yang2010optimality}
S.~Yang, S.~Ho, J.~Meng, and E.~hui Yang, ``Optimality of subspace coding for
  linear operator channels over finite fields,'' in \emph{Proc. IEEE
  Information Theory Workshop}, 2010.

\bibitem{yang2010linear}
S.~Yang, S.~Ho, J.~Meng, and E.~Yang, ``Linear operator channels over finite
  fields,'' \emph{Arxiv preprint arXiv:1002.2293}, 2010.

\bibitem{jafari2009capacity}
M.~Jafari, S.~Mohajer, C.~Fragouli, and S.~Diggavi, ``{On the capacity of
  non-coherent network coding},'' in \emph{Proceedings of the 2009 IEEE
  international conference on Symposium on Information Theory-Volume 1}, 2009,
  pp. 273--277.

\bibitem{5205750}
R.~Nobrega and B.~Uchoa-Filho, ``Multishot codes for network coding: Bounds and
  a multilevel construction,'' in \emph{Information Theory, 2009. ISIT 2009.
  IEEE International Symposium on}, 2009, pp. 428 --432.

\bibitem{5507933}
R.~W. Nobrega and B.~F. Uchoa-Filho, ``Multishot codes for network coding using
  rank-metric codes,'' in \emph{Wireless Network Coding Conference (WiNC), 2010
  IEEE}, 2010, pp. 1 --6.

\bibitem{yang2010coding}
S.~Yang, J.~Meng, and E.~Yang, ``Coding for linear operator channels over
  finite fields,'' in \emph{Information Theory Proceedings (ISIT), 2010 IEEE
  International Symposium on}.\hskip 1em plus 0.5em minus 0.4em\relax IEEE, pp.
  2413--2417.

\bibitem{etzion-vardy08}
T.~Etzion and A.~Vardy, ``Error-correcting codes in projective space,'' in
  \emph{IEEE International Symposium on Information Theory, 2008 (ISIT 2008)},
  Jul. 2008, pp. 871--875.

\bibitem{1054054}
B.~Masnick and J.~Wolf, ``On linear unequal error protection codes,''
  \emph{Information Theory, IEEE Transactions on}, vol.~13, no.~4, pp. 600 --
  607, Oct. 1967.

\bibitem{1056327}
I.~Boyarinov and G.~Katsman, ``Linear unequal error protection codes,''
  \emph{Information Theory, IEEE Transactions on}, vol.~27, no.~2, pp. 168 --
  175, Mar. 1981.

\bibitem{243441}
A.~Calderbank and N.~Seshadri, ``Multilevel codes for unequal error
  protection,'' \emph{Information Theory, IEEE Transactions on}, vol.~39,
  no.~4, pp. 1234 --1248, Jul. 1993.

\bibitem{diggavisuccessive}
S.~Diggavi and D.~Tse, ``{On successive refinement of diversity},'' in
  \emph{Proc. Allerton Conf. Communication, Control, and Computing}, p. 2004.

\bibitem{cover1991elements}
T.~Cover and J.~Thomas, ``{Elements of information theory},'' \emph{Wiley
  Series In Telecommunications}, p. 542, 1991.

\bibitem{720547}
T.~Cover, ``Comments on broadcast channels,'' \emph{Information Theory, IEEE
  Transactions on}, vol.~44, no.~6, pp. 2524 --2530, {O}ct 1998.

\end{thebibliography}
 
\end{document}